\tikzset{redvertex/.style={black, draw=black, circle, fill, scale=0.5}}
\tikzset{blkvertex/.style={black, draw=black, circle, scale=0.5}}
\newtheorem{theorem}{Theorem}
\newtheorem{lemma}[theorem]{Lemma}
\newtheorem{observation}[theorem]{Observation}
\newtheorem{problem}{Problem}
\newcommand{\BibTeX}{B\kern-.05em{\sc i\kern-.025em b}\kern-.08em\TeX}
\begin{document}


\begin{frontmatter}


\paperid{7449} 


\title{EFX Orientations of Multigraphs}


\author[A]{\fnms{Kevin}~\snm{Hsu}\orcid{0000-0002-8932-978X}\thanks{Corresponding Author. Email: kevinhsu996@gmail.com}}


\address[A]{University of Victoria}


\begin{abstract}
We study EFX orientations of multigraphs with self-loops. In this setting, vertices represent agents, edges represent goods, and a good provides positive utility to an agent only if it is incident to the agent. We focus on the bi-valued symmetric case in which each edge has equal utility to both incident agents, and edges have one of two possible utilities $\alpha > \beta \geq 0$. In contrast with the case of simple graphs for which bipartiteness implies the existence of an EFX orientation, we show that deciding whether a symmetric multigraph $G$ of any multiplicity $q \geq 2$ has an EFX orientation is NP-complete even if $G$ is bipartite, $\alpha > q\beta$, and $G$ contains a structure called a {\em non-trivial odd multitree} (NTOM). Moreover, we show that NTOMs are a problematic structure in the sense that even very simple NTOMs can fail to have EFX orientations, and multigraphs that do not contain NTOMs always have EFX orientations that can be found in polynomial-time.
\end{abstract}

\end{frontmatter}


\section{Introduction}\label{sec:intro}

The fair division problem is concerned with distributing resource or tasks among a group of agents in a fair manner. This problem has attracted widespread interest among researchers due to its applicability in wide-ranging situations, from splitting rent among housemates, course allocation among students, and chores division between household members.

We are particularly interested in the division of indivisible goods. Specifically, these are goods that cannot be divided or shared between different agents, such as single movie tickets or rooms in a shared living situation. Classical fairness notions such as envy-freeness (EF) and proportionality are often unattainable in this setting. Consider the classic example of allocating one good among two agents --- one agent is necessarily envious. Consequently, alternative fairness notions have been introduced. Among the most prominent of these is envy-freeness up to any good (EFX) \citep{caragiannis2019unreasonable,gourves2014near}. In an EFX allocation, we allow for each agent $i$ to envy another agent $j$, as long as the envy can be alleviated by disregarding any one good allocated to $j$. Since its inception, the existence problem of EFX allocations has evaded the attempts of many researchers and is one of the most important problems in this field, and has even been referred to as "fair division's most enigmatic question" by \citet{procaccia2020technical}.

Much of the work surrounding this problem focuses on special cases. For example, EFX allocations of goods are known to exist if the agents have identical utility functions or identical ranking of the goods~\citep{plaut2020almost}, if the agents have lexicographic utility functions~\citep{hosseini2023fairly}, or if the number of agents is at most 3~\citep{chaudhury2020efx,akrami2022efx}. Recently, \citet{christodoulou2023fair} initiated the study of graphical instances, which are instances representable by a graph in which vertices represent agents, edges represent goods, and a good can only have positive utility to an agent if its corresponding edge is incident to the agent's corresponding vertex. This captures the setting where agents only care about goods that are "close" to them, which can arise when the goods are located in physical space and agents have geographical constraints. In such situations, it is desirable for the goods to only be assigned to agents that have positive utility for them, so that none of them are "wasted". These types of allocations are called {\em orientations}, reflecting the fact that they exactly correspond to orientations of the underlying graph.

Although EFX orientations are desirable, their existence is harder to decide than EFX allocations. Indeed, \citet{christodoulou2023fair} showed that although EFX allocations always exist for simple graphs, deciding whether EFX orientations exist is NP-complete. \citet{deligkas2024ef1} further showed that this decision problem remains NP-hard even for graphs with a vertex cover of size 8. This leads one to ask whether there exist classes of graphs for which the existence of an EFX orientation can be decided efficiently.

Recent works have begun to address this. For example, \citet{zeng2024structure} found surprising connections between EFX orientations and the chromatic number of a graph, showing that EFX orientations exist for all bipartite simple graphs, and can only fail to exist for simple graphs $G$ whose chromatic number $\chi(G)$ is 3 or greater.

EFX allocations and orientations of multigraphs have also been studied. Most relevant to our work, \citet{kaviani2024almost} showed that symmetric multigraphs (i.~e.\ multigraphs in which each edge has equal utility to both endpoints) always have EFX allocations (Theorem 8.1 of \citep{kaviani2024almost}). Many other positive results have also been obtained --- EFX allocations of a multigraph $G$ exist if $G$ is bipartite \citep{afshinmehr2024efx,bhaskar2024efx,sgouritsa2025existence}, a multicycle \citep{afshinmehr2024efx}, a multitree \citep{bhaskar2024efx}, if $G$ has girth $2 \chi(G) -1$ \citep{bhaskar2024efx}, if each vertex has at most $\lceil |V(G)|/4 \rceil -1$ neighbours \citep{sgouritsa2025existence}, or if the shortest cycle with non-parallel edges has length at least 6 \citep{sgouritsa2025existence}.

On the other hand, deciding whether EFX orientations for multigraphs exist is NP-complete  because multigraphs are a generalization of simple graphs. In fact, \citet{deligkas2024ef1} showed that this problem remains NP-complete even for multigraphs $G$ containing only 10 vertices using a reduction from \textsc{Partition}. \citet{afshinmehr2024efx} studied the special case of bipartite multigraphs, and determined the relationship between the multiplicity $q$ and the diameter $d$ of bipartite multigraphs $G$ and their EFX orientability. Specifically, they showed that for bipartite multigraphs $G$, if (1) $G$ is acyclic, $q=2$, and $d \leq 4$, or if (2) $G$ is acyclic, $q>2$, and $d \leq 2$, then one can compute an EFX orientation of $G$ in polynomial time. In all other cases, $G$ may or may not have an EFX orientation.

\subsection{Our Contribution}

The existing work relating to fair orientations and allocations have been mostly limited to the setting of simple graphs. In this paper, we consider the more general setting of multigraphs. Our model allows for both parallel edges and self-loops. The fundamental premise that an agent can only have non-zero utility for its incident edges remains the same.

We find it useful to limit our focus to {\em bi-valued symmetric multigraphs}. Specifically, we make two assumptions. First, {\em symmetric} means that for each edge, both of its endpoints agree on the utility of the edge. In this case, we call the utility of an edge its {\em weight}. Second, {\em bi-valued} means that each edge is of one of two possible weights $\alpha > \beta \geq 0$ (i.~e.\ {\em heavy edges} and {\em light edges}). Although these conditions limit the generality of our results, similar assumptions have been made in existing literature (for example, symmetric valuations have been considered in \citep{christodoulou2023fair,afshinmehr2024efx,deligkas2024ef1,zeng2024structure} and bi-valued valuations have been considered in \citep{ABFHV21,feige2022maximin,garg2023computing,zeng2024structure}). In particular, the use of light and heavy edges appears in the proof of Lemma 3.1 in \citep{zeng2024structure}, although they do not refer to them explicitly as such. 

Our first contribution is a hardness result regarding EFX orientations. Since simple graphs are a special case of multigraphs of multiplicity $1$ (i.~e.\ the maximum number of parallel edges between a pair of vertices is $1$), the result due to \citet{zeng2024structure} can be restated as: bipartite multigraphs $G$ of multiplicity $1$ have EFX orientations. Thus, it seems plausible that bipartiteness is also a sufficient condition for multigraphs of higher multiplicities $q$ to have EFX orientations. As we will see, this is not the case for any $q \geq 2$, that is, there exist bipartite multigraphs that do not have EFX orientations. In contrast, \citet{afshinmehr2024efx} showed that bipartite multigraphs always have EFX allocations. Similarly, the problem of finding EFX orientations could be tractable if there are few values that utility functions can assume, or if those values fall within a small interval. Our result implies that the problem remains NP-hard under a restrictive setting that incorporates the above considerations. Moreover, it suggests a structure that we call a {\em non-trivial odd multitree} (NTOM) that guarantees the existence of EFX orientations when avoided.

We informally define some terminology. Given a bi-valued symmetric multigraph $G$ with two edge weights, a {\em heavy component} is a component of $G$ if one were to ignore the set of light edges. an NTOM is a multigraph that is a tree on at least two vertices if parallel edges are ignored, and every edge has odd multiplicity. We say that the set $S$ of heavy edges of a heavy component of $G$ {\em induces} an NTOM if the edge-induced submultigraph $G[S]$ is an NTOM.

\begin{restatable}{theorem}{restatemain}\label{thm:main}
	For any fixed $q \geq 2$, deciding whether a bi-valued symmetric multigraph $G$ of multiplicity $q$ has an EFX orientation is NP-complete, even if the following hold:
	\begin{enumerate}
		\item $G$ is bipartite;
		\item $\alpha > q\beta$;
		\item The heavy edges of each heavy component of $G$ induce an NTOM.
	\end{enumerate}
\end{restatable}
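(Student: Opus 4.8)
The plan is to establish membership in NP first and then prove hardness by a gadget reduction from a restricted satisfiability problem. Membership is routine: an orientation assigns each edge to one of its incident vertices and so is a certificate of size linear in $|E(G)|$; since utilities are additive and each agent values only its incident edges, verifying the EFX condition for every ordered pair of adjacent agents takes polynomial time. The useful structural observation to extract at this stage is that envy is \emph{local}: for adjacent agents $i$ and $j$, whether $i$ envies $j$ up to any good depends only on how the $i$--$j$ parallel edges are split together with the single fact of whether $j$'s bundle contains any good that $i$ does not value. Concretely, if $j$ holds some edge not incident to $i$ (a \emph{foreign} good), then the EFX constraint for $i$ against $j$ collapses to ordinary envy-freeness, namely $u_i(X_i) \ge w(\text{$i$--$j$ edges oriented to } j)$; otherwise $i$ may discount one $i$--$j$ edge.

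This locality is exactly what the reduction exploits. For a variable, I would use a pair of vertices joined by a heavy edge of odd multiplicity; by the computation above such a pair admits precisely two near-balanced EFX splits, which I interpret as the truth value of the variable. The assumption $\alpha > q\beta$ guarantees that this binary choice is robust: the total weight of all light edges between any pair is strictly less than a single heavy edge, so no accumulation of light edges can ever compensate for an imbalance in the heavy split. Light edges instead serve as the \emph{forcing} mechanism. The key gadget idea is that attaching a light edge to an endpoint $u$ and orienting it toward $u$ plants a foreign good in $u$'s bundle, upgrading the envy constraint of its heavy neighbour from EFX to full envy-freeness and thereby forcing the incident heavy edge away from $u$. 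Playing two such forcings against each other across a single heavy edge makes that heavy component non-orientable; conversely, satisfying the corresponding constraint relieves the pressure. Clause gadgets are then built by wiring variable gadgets to a common structure with light edges so that the induced envy-freeness demands are jointly satisfiable if and only if the clause is satisfied.

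The correctness argument splits in the usual way. For the forward direction I would show that a satisfying assignment yields an explicit EFX orientation: set each variable pair to the near-balanced split dictated by its truth value, orient the forcing light edges accordingly, and verify locally that every pair meets its now relaxed or satisfied constraint. The backward direction is the more delicate half: I must show that \emph{every} EFX orientation is forced into a canonical form from which a satisfying assignment can be read off. Here I would argue that the domination $\alpha > q\beta$ pins down each heavy split to one of its two balanced states, that the forcing light edges leave no slack, and that no clause gadget can be EFX-oriented unless its variables' states satisfy the clause.

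I expect the main obstacle to be designing gadgets that enforce exactly the intended constraints while \emph{simultaneously} respecting all three promised properties. Bipartiteness forbids odd heavy cycles and must be preserved under all the wiring; the requirement $\alpha > q\beta$ caps the number of light edges that may share a pair and hence the forcing budget available at each vertex; and, most restrictively, the heavy edges of each heavy component must induce a non-trivial odd multitree, so the gadgets may use neither heavy cycles nor heavy edges of even multiplicity. Reconciling enough forcing and propagation power with this acyclic, odd-multiplicity, bipartite skeleton --- and then proving that the resulting odd multitrees fail to admit an EFX orientation exactly when the formula is unsatisfiable --- is the crux of the construction.
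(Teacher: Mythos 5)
Your overall strategy matches the paper's: membership in NP is routine, and hardness is shown by a gadget reduction from a satisfiability-type problem (the paper reduces from \textsc{CircuitSAT}), with truth values encoded in the orientation of heavy edges and light edges serving as the forcing and propagation mechanism under the domination $\alpha > q\beta$. Your ``locality of envy'' observation and your forcing principle --- a vertex that already holds a good its neighbour does not value upgrades that neighbour's EFX constraint to plain envy-freeness --- are exactly the engine of the paper's gadget analysis.

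However, the proposal stops at the point you yourself identify as ``the crux of the construction,'' and that crux is the actual content of the proof. You exhibit no concrete gadgets and verify neither direction of correctness for any of them. The paper's work lies precisely here: it reuses the duplication and OR gadgets of Christodoulou et al.\ but designs \emph{new} NOT and TRUE gadgets whose entire purpose is to preserve bipartiteness and to keep the heavy edges of every heavy component an odd multitree, while keeping each vertex incident with few light edges so that the tight condition $\alpha > q\beta$ suffices. The analysis of those gadgets rests on a dedicated lemma about an induced submultigraph $H_q$ (a five-vertex path whose middle carries parallel light edges), showing that in any EFX orientation at least one of its two pendant light edges must point outward; this lemma is what makes the NOT and TRUE gadgets force correctly, and nothing in your outline supplies it or a substitute. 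A further, smaller concern: your variable gadget is a heavy edge of odd multiplicity with ``two near-balanced splits'' as the truth value, whereas the paper uses a single heavy edge per occurrence (multiplicity one is already odd, so the multitree condition is met); with multiplicity $2k+1\ge 3$ each endpoint holds at least $k\alpha$ and becomes too ``rich'' for light edges to propagate its state outward in the way your forcing mechanism requires, so you would still need to explain how clause gadgets read the variable's state. As written, the proposal is a plausible plan rather than a proof.
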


Theorem~\ref{thm:main} suggests that the existence of a heavy component whose heavy edges induce an NTOM is a barrier to finding EFX orientations. One might hope that EFX orientations can still be found as long as there are few NTOMs. Unfortunately, we are able to find a simple class of examples showing this not to be the case.

\begin{observation}\label{obs:one-tree-example}
	For any $q \geq 1$, there exists a multigraph of multiplicity $q$ containing a unique heavy component whose heavy edges induce an NTOM, that fails to have an EFX orientation.
\end{observation}
\begin{proof}
	Let $G$ be the multigraph shown in Figure~\ref{fig:one-tree-example}. By symmetry, $G$ has a unique orientation. By setting $\alpha > q\beta$, the vertex not receiving the heavy edge strongly envies the vertex that does.
\end{proof}

\begin{figure}
	\centering
	\begin{tikzpicture}
		\node[blkvertex] (1) at (0,0) {};
		\node[blkvertex] (2) at (1,0) {};
		\draw[thick] (1) to (2);
		\draw[dashed] (1) to [loop, in=135, out=225, looseness=15] node[left] {*} (1);
		\draw[dashed] (2) to [loop, in=45, out=315, looseness=15] node[right] {*} (2);
	\end{tikzpicture}
	
	\caption{Example of an NTOM with two vertices having no EFX orientations. The solid edge represents a heavy edge. The dashed self-loops marked with * each represents $q \geq 1$ light edges. }
	\label{fig:one-tree-example}
\end{figure}

On the other hand, we show that if $G$ does not contain such a heavy component, then it has an EFX orientation. Our constructive proof implies a polynomial-time algorithm for finding an EFX orientation for this case.

\begin{restatable}{theorem}{restateNoTrees}\label{thm:non-tree-thm}
	Let $G$ be a bi-valued symmetric multigraph of multiplicity $q \geq 1$. If $G$ does not contain a heavy component whose heavy edges induce an NTOM, then $G$ has an EFX orientation that can be found in polynomial time.
\end{restatable}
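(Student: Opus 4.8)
The plan is to reduce the whole problem to orienting the heavy edges well, and then to fill in the light edges and self-loops by a simple balancing step. Throughout, write $h^+(v)$ for the number of heavy edges an orientation sends to $v$, and for two adjacent vertices $v,w$ let $m_{vw}$ be the multiplicity of the bundle between them (its multiset of parallel edges), and let $h^{vw}_w$ (resp. $\ell^{vw}_w$) be the number of heavy (resp. light) edges of that bundle oriented toward $w$. The key intermediate notion I would isolate is a \emph{heavy envy-free} orientation of a heavy component $C$: an orientation of the heavy edges of $C$ with $h^+(v)\ge h^{vw}_w$ for every edge $vw$ of $C$. Intuitively this says each agent's total heavy utility already dominates the heavy utility it sees in any single neighbour's bundle, so no heavy envy arises at all, which is exactly the slack I want before adding light edges.

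The main lemma I would prove is that a connected heavy multigraph $C$ admits a heavy envy-free orientation whenever $C$ is not a non-trivial odd multitree, i.e. whenever $C$ is a single vertex, contains a cycle in its underlying simple graph $G'$, or is a tree in $G'$ with at least one bundle of even multiplicity. The construction I have in mind first fixes, in every bundle of multiplicity $m$, a balanced split giving $\lfloor m/2\rfloor$ edges to each endpoint, and then routes the single remaining ``extra'' edge of each odd bundle according to an auxiliary orientation $D$ of $G'$. The property I would demand of $D$ is a \emph{covering property}: every vertex either has in-degree at least $1$ in $D$ or is incident to an even bundle. When $G'$ contains a cycle, such a $D$ is just a min-in-degree-$\ge 1$ orientation, which exists because a connected graph with a cycle has at least as many edges as vertices (orient a spanning unicyclic subgraph cyclically along its cycle and away from the cycle along the hanging trees). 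When $G'$ is a tree with an even bundle, I would root $G'$ at an endpoint of that even bundle and orient every edge from parent to child; then every non-root vertex has in-degree $1$ and the root is incident to the even bundle. A short computation then yields heavy envy-freeness: for a vertex $v$ and a bundle $vw$ whose extra edge points away from $v$, the surplus $h^+(v)-\lfloor m_{vw}/2\rfloor\ge 1$ that is required is supplied by whatever edge witnesses the covering property at $v$, namely an odd bundle whose extra $D$ directs into $v$, or an even bundle incident to $v$ (whose balanced share is at least $1$).

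Given a heavy envy-free orientation in each heavy component, I would finish as follows. Assign every self-loop to its unique incident vertex; this only raises that agent's utility and is invisible to everyone else. In every bundle split the light edges as evenly as possible, so that $|\ell^{vw}_v-\ell^{vw}_w|\le 1$. To check EFX from $v$ toward any $w$, observe that $v$ values $w$'s bundle at $h^{vw}_w\alpha+\ell^{vw}_w\beta$, and removing the least valuable positively valued good (a light edge when one is present, otherwise a heavy edge) leaves at most $h^{vw}_w\alpha+(\ell^{vw}_w-1)\beta$. Since heavy envy-freeness gives $h^+(v)\ge h^{vw}_w$ and even splitting gives $\ell^{vw}_v\ge \ell^{vw}_w-1$, the utility of $v$ is at least $h^+(v)\alpha+\ell^{vw}_v\beta$, which already dominates this quantity; the same bound covers the case $\ell^{vw}_w=0$ (where one removes a heavy edge) and the case $\beta=0$. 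This also handles light-only bundles running between distinct heavy components, since there $h^{vw}_w=0$. Every step—decomposing into heavy components, constructing $D$, and balancing—runs in polynomial time, giving the stated algorithm.

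The step I expect to be the main obstacle is the main lemma, and within it the precise claim that ``not a non-trivial odd multitree'' is exactly the condition guaranteeing an auxiliary orientation $D$ with the covering property: the cycle case and the tree-with-an-even-bundle case must be unified under that single property, and one must verify that the only connected heavy multigraphs left over—trees all of whose bundles are odd—are precisely the non-trivial odd multitrees excluded by hypothesis. Once heavy envy-freeness is secured, the EFX verification and the handling of light edges and self-loops are routine.
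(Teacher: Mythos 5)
Your overall architecture (orient the heavy edges so that every agent's heavy share dominates what it sees in any single neighbour's bundle, then balance the light edges) is close in spirit to the paper's, and your covering-orientation $D$ is a clean packaging of the paper's case analysis (spanning tree rooted at an even bundle for non-odd multitrees; a unicyclic orientation when the heavy edges contain a cycle or self-loop). However, the final EFX verification contains a genuine gap. Under the paper's definition, $v$ strongly envies $w$ if there exists \emph{any} good $e \in \pi_w$ with $u_v(\pi_v) < u_v(\pi_w \setminus \{e\})$ --- including goods with $u_v(e) = 0$. Consequently, if $\pi_w$ contains even one good that $v$ values at zero (a self-loop at $w$, or any edge between $w$ and a third vertex), then avoiding strong envy forces $u_v(\pi_v) \geq u_v(\pi_w)$, i.e.\ \emph{no envy at all}. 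Your check instead removes ``the least valuable positively valued good,'' bounding the residual by $h^{vw}_w\alpha + (\ell^{vw}_w - 1)\beta$; this verifies a strictly weaker property than the paper's EFX. Your construction really does fail on small instances satisfying the theorem's hypothesis: take vertices $v, w, y$ with two heavy edges and one light edge between $v$ and $w$, and two heavy edges between $w$ and $y$. This is a single heavy component whose heavy edges induce a non-odd multitree. Your balanced split gives each vertex one heavy edge per incident bundle; if the lone light edge goes to $w$, then $u_v(\pi_v) = \alpha$ while $u_v(\pi_w) = \alpha + \beta$, and removing the $wy$-heavy edge from $\pi_w$ (worth $0$ to $v$) leaves $v$ still envious --- strong envy. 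The ``even split'' rule does not determine which endpoint absorbs the odd light edge, and the correct choice depends on non-local information; your per-bundle argument cannot certify it.

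This is exactly the difficulty the paper's proof is organized around: it maintains \emph{full} envy-freeness throughout, not merely EFX, by exploiting the $\geq \alpha$ (or $\geq \beta$) surplus together with the Plaut--Roughgarden two-agent cut-and-choose to orient each bundle so that \emph{neither} endpoint envies the other (Lemma~\ref{lemma:extension-lemma}). Genuine envy of $\beta$ is permitted only across a single leftover light edge per type-1 component, and there the PEF condition guarantees the envied bundle consists entirely of edges the envier values positively, so removing any good alleviates the envy (Lemma~\ref{lemma:light_matching}). To repair your argument you would need either to prove that the odd light edges can always be directed so that no envy arises at all, or to restrict where envy occurs and control the composition of the envied bundle as the paper does. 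A secondary point: your trichotomy omits heavy components whose underlying simple graph is an all-odd tree but which carry a heavy self-loop; these are not non-trivial odd multitrees, so your main lemma must cover them (the self-loop can serve as the covering witness at the root, but this needs to be said).
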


\subsection{Related Work}

We briefly survey related work surrounding fair allocations on graphs. For a more general survey of fair division, we refer the reader to \citet{amanatidis2023fair}.

Beyond the results pertaining to EFX orientations discussed in Section~\ref{sec:intro}, the popular EF1 criterion has also been considered. In contrast with EFX, where envy is alleviated by diregarding {\em any} good, EFX only requires envy to be alleviated by diregarding the {\em most valuable} good. \citet{deligkas2024ef1} showed that EF1 orientations of simple graphs always exist and designed a pseudopolynomial-time algorithm for finding such orientations. This contrasts with the fact that deciding whether EFX orientations exist is NP-complete \citep{christodoulou2023fair}.

The division of {\em chores} (i.~e.\ items with non-positive marginal utility) on graphs has also been considered. The goods and chores cases seem to be fundamentally different. Recall that \citet{christodoulou2023fair} showed that when there are only goods, deciding whether EFX orientations exist is NP-complete. In contrast, \citet{hsu2025polynomialmanuscript} found polynomial-time algorithms that decide whether EF1 and EFX orientations exist when there are only chores, and find such orientations whenever they exist. On the other hand, deciding whether EF1 and EFX orientations exist is NP-complete for multigraphs \citep{hsu2025polynomialmanuscript}. \citet{ijcai2024p338} explored the setting of mixed manna, in which each item can be a good or a chore independently of other items. The original definition of EFX allocations can be adapted to this setting by allowing for either a good to be removed from the envied agent's bundle or a chore to be removed from the envious agent's bundle. Additionally, one might decide to allow for the removed item to have zero marginal utility. These considerations result in four variants of EFX. \citet{ijcai2024p338} classified the complexity of deciding whether EFX orientations or allocations exist in a mixed manna setting, and finding one if it exists, for each of the four variants of EFX.

A related setting involving {\em restricted additive instances} has also been studied. In this setting, every agent has an additive utility function. Moreover, each item $g$ is associated with a number $v_g \geq 0$ such that for each agent $i$ and each item $g$, we have $u_i(g) \in \{0, v_g\}$. Thus, restricted additive instances generalize symmetric graphical instances, and can be interpreted as symmetric {\em hypergraphs} in a straightforward manner. The restricted additive setting models scenarios in which the goods being allocated have intrinsic value, but not every good can be allocated to every agent. For instance, \citet{ijcai2022p3} gave the example of fairly allocating food --- while the nutritional value of a food item is intrinsic to the food item itself, some people may have food allergies and hence derive zero utility from certain items. \citet{ijcai2022p3} showed that restricted additive instances always have EF2X allocations, which are a relaxation of EFX allocations in which no agent envies another if they ignore any {\em two} items allocated to agent they envy. \citet{kaviani2024almost} showed that restrictive additive multigraphs (which are equivalent to symmetric multigraphs) always have EFX allocations. 

\subsection{Organization}

We introduce the model in Section~\ref{section:prelim}. In Section~\ref{section:hardness}, we prove our hardness result (Theorem~\ref{thm:main}). In Section~\ref{section:EFX-orientations}, we prove Theorem~\ref{thm:non-tree-thm}. We conclude the paper with a brief discussion and an open problem in Section~\ref{section:discuss}.

\section{Preliminaries}\label{section:prelim}

\subsection{The Multigraph Model}

A multigraph $G$ consists of a vertex set $V$ and an edge set $E$. Each edge $e$ is associated with two vertices $u, v$ called its {\em endpoints}. We say $u, v$ are {\em adjacent}, and $e$ is {\em incident} to each of $u, v$. If $u=v$, then $e$ is called a {\em self-loop} at $u=v$. Two edges $e, e'$ are {\em parallel} if they share the same two endpoints. Every edge is parallel with itself. The {\em multiplicity} of an edge $e$ is the number of edges parallel with $e$. The {\em multiplicity} $q$ of $G$ is the maximum multiplicity of an edge.

An instance of the fair division problem is a pair $(G, u)$ where $G=(V,E)$ is a multigraph and $u$ is a collection of utility functions $u_i: \mathcal{P}(E) \rightarrow \mathbb{R}_{\geq 0}$, one corresponding to each vertex $i$. The vertices and edges of $G$ represent agents and goods, respectively. Throughout the paper, we reserve the symbols $n$ and $m$ to refer to the number of vertices/agents and edges/goods, respectively.

	For a subset $S$ of edges and a vertex $i$, we use $u_i(S)$ to represent the utility that $i$ derives from receiving the set $S$. If $S$ contains a single edge $e$, we write $u_i(e)$ in lieu of $u_i(\{e\})$ for brevity. We make two assumptions about $u_i$. First, each $u_i$ is {\em additive}, i.~e.\ for each subset $S \subseteq E$, we have $u_i(S) = \sum_{e \in S} u_i(e)$. Second, if an edge $e$ is not incident to a vertex $i$, then $u_i(e) = 0$.

An instance $(G, u)$ is said to be {\em symmetric} if $u_i(e) = u_j(e)$ whenever $i, j$ are both endpoints of an edge $e$. In this case, we refer to $u_i(e)$ as the {\em weight} of the edge $e$. By abusing terminology, we also say that the multigraph $G$ is symmetric in this case. If there are only two possible edge weights $\alpha > \beta \geq 0$, we say $G$ is a {\em bi-valued symmetric multigraph}. We reserve the symbols $\alpha, \beta$ for edge weights and call edges of weight $\alpha$ {\em heavy edges} and edges of weight $\beta$ {\em light edges}.

For a bi-valued symmetric multigraph $G$, we define a {\em heavy component} as a maximal vertex-induced submultigraph $K$ of $G$ such that between any two vertices $v \neq w$ of $K$, there exists a path between $v$ and $w$ that only contains heavy edges. A {\em multitree} is a multigraph that is a tree if each set of parallel edges is treated as a single edge. A multitree is {\em non-trivial} if it has more than one vertex and {\em odd} if every edge has odd multiplicity.

Given a multigraph $G$ and a set $S \subseteq E(G)$ of edges, we define the {\em edge-induced submultigraph} $G[S]$ of $G$ as the multigraph such that
\begin{itemize}
	\item $V(G[S]) \coloneqq \{v \in V(G) \mid v \text{ is an endpoint of an edge in } S\}$;
	\item $E(G[S]) \coloneqq S$.
\end{itemize}
For the sake of brevity, we say that the set $S$ of edges {\em induces a multitree} if $G[S]$ is a multitree. Similarly, we say that $S$ {\em induces a non-trivial odd multitree (NTOM)} if $G[S]$ is an NTOM.

\subsection{Fair Allocations}

Let $G = (V, E)$ be a multigraph. A {\em partial allocation} $\pi$ of $G$ is an $n$-tuple $(\pi_1, \pi_2, \dots, \pi_n)$ of pairwise disjoint subsets $\pi_i$ of $E$. If $\cup_{i \in [n]} \pi_i = E$, we call $\pi$ a {\em complete allocation} or simply an {\em allocation}. Fix a (possibly partial) allocation $\pi$ and two vertices $i \neq j$. We say $i$ {\em envies} $j$ if $u_i(\pi_i) < u_i(\pi_j)$. We say $i$ {\em strongly envies} $j$ if there is an edge $e \in \pi_j$ such that $u_i(\pi_i) < u_i(\pi_j \setminus \{e\})$. If no vertex envies another, we say $\pi$ is {\em envy-free} (EF). If no vertex strongly envies another, we say $\pi$ is {\em envy-free up to any edge (good)} (EFX).

We also define the notion of {\em private envy-freeness} (PEF). Let $i \neq j$ be two vertices and $E_{ij}$ be the set of edges between $i$ and $j$, not including self-loops. For a (partial) orientation $\pi$ of $G$ and two vertices $i$ and $j$, we say $\pi$ is PEF between $i$ and $j$ if $u_i(\pi_i \cap E_{ij}) \geq u_i(\pi_j \cap E_{ij})$ and $u_j(\pi_i \cap E_{ij}) \geq u_j(\pi_j \cap E_{ij})$, i.~e.\ $i$ and $j$ do not envy each other if we only consider the edges between them. Clearly, if $\pi$ is PEF between $i$ and $j$, then neither $i$ nor $j$ envies the other.

A (partial) allocation is called a {\em (partial) orientation} if there exists a (partial) orientation of $G$ such that $e \in \pi_i$ if and only if $e$ is directed toward $i$. Orientations are desirable because none of the edges are "wasted" by being given to a non-incident vertex to which the edge has zero utility. A (possibly partial) orientation $\pi$ is an {\em extension} of a partial orientation $\pi'$ if $\pi$ orients all of the edges oriented by $\pi'$ in the same way as $\pi'$, and possibly orients some additional edges.

An allocation $\pi$ is called a {\em Pareto improvement} over an allocation $\pi'$ if $u_i(\pi_i) > u_i(\pi'_i)$ for some vertex $i$ and $u_j(\pi_j) \geq u_j(\pi'_j)$ for each vertex $j$. If no allocation is a Pareto improvement of an allocation $\pi$, then we say $\pi$ is {\em Pareto optimal} (PO).

\subsection{Circuit Satisfiability}

A Boolean circuit is a directed acyclic graph with three types of vertices:
\begin{enumerate}
	\item {\em Input vertices} with zero in-degree that are labelled with either a unique Boolean variable or with \texttt{true};
	
	\item {\em Internal vertices} that are each labelled with AND, OR, or NOT;
	
	\item A designated internal vertex called the {\em output vertex}.
\end{enumerate}
Moreover, we require that each internal vertex labelled AND or OR to have exactly 2 in-neighbours, and each internal vertex labelled NOT to have a unique in-neighbour. Clearly, a Boolean circuit represents a Boolean formula in a natural way.

Given a Boolean circuit, \textsc{CircuitSAT} \citep{cook1971complexity} asks whether it is possible to assign Boolean truth values to the input vertices representing Boolean variables in such a way that the output vertex is true (i.~e.\ the Boolean formula represented by the Boolean circuit is satisfied).

\section{Proof of NP-Completeness}\label{section:hardness}

In this section, we prove Theorem~\ref{thm:main}. Similarly to \citet{christodoulou2023fair}, we also show a reduction from the NP-complete problem \textsc{CircuitSAT} \citep{cook1971complexity}.

We first prove a lemma concerning the multigraph $H$ shown in Figure~\ref{fig:subgadget}, which appears as an induced submultigraph in some of the gadgets we use in Theorem~\ref{thm:main}.

\begin{figure}
	\centering
	\begin{tikzpicture}
		\node[redvertex] (1) at (0,0) {};
		\node[redvertex] (2) at (1,0) {};
		\node[redvertex] (3) at (2,0) {};
		\node[redvertex] (4) at (3,0) {};
		\node[redvertex] (5) at (4,0) {};
		\node (l1) at (0, -0.25) {$v_1$};
		\node (l2) at (1, -0.25) {$v_2$};
		\node (l3) at (2, -0.25) {$v_3$};
		\node (l4) at (3, -0.25) {$v_4$};
		\node (l5) at (4, -0.25) {$v_5$};
		\draw[thick] (2) to[bend left=25] (3)
		(3) to[bend left=25] (4);
		\draw[dashed] (1) to node[above] {$e$} (2)
		(4) to node[above] {$e'$} (5)
		(2) to[bend right=25] (3)
		(3) to[bend right=25] (4);
	\end{tikzpicture}
	\caption{The multigraph $H$. Each solid (resp.\ dashed) edge represents a heavy (resp.\ light) edge.}
	\label{fig:subgadget}
\end{figure}

\begin{lemma}\label{lemma:subgadget}
	Let $G$ be a bi-valued symmetric multigraph that contains $H$ (see Figure~\ref{fig:subgadget}) as an induced submultigraph. If $\alpha > q\beta$ and none of $v_2, v_3, v_4$ are adjacent to a vertex in $V(G) \setminus V(H)$, then $e$ is directed toward $v_1$ or $e'$ is directed toward $v_5$ in any EFX orientation of $G$.
\end{lemma}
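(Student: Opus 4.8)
The plan is to prove the contrapositive: assume $\pi$ is an orientation of $G$ in which $e$ is oriented toward $u_2$ and $e'$ is oriented toward $u_4$, and exhibit a strong envy, so that $\pi$ is not EFX. Write $h$ for the heavy edge between $u_2$ and $u_3$ and $h'$ for the heavy edge between $u_3$ and $u_4$; recall there are also $q-1$ light edges parallel to $h$ and one light edge parallel to $h'$. Since $u_2,u_3,u_4$ have no neighbours in $G-H_q$ and $H_q$ is induced, the bundles $\pi_{u_2},\pi_{u_3},\pi_{u_4}$ consist exactly of the edges of $H_q$ pointed at them, with $e\in\pi_{u_2}$ and $e'\in\pi_{u_4}$. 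A vertex among $\{u_2,u_3,u_4\}$ receiving neither heavy edge then gets only light edges, so its value is at most $q\beta<\alpha$ (and $u_4$ at most $2\beta$). I will assume $\beta>0$; this is necessary, since for $\beta=0$ the light edges are worthless and the orientation $h\to u_2,\ h'\to u_3$ is EFX with $e\to u_2,\ e'\to u_4$, so the statement would fail.

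The engine is the following observation, and it is exactly where the ``remove any good'' form of EFX is used: if an agent $i$ lacks a heavy edge $h=ij$ (so $h\in\pi_j$), satisfies $u_i(\pi_i)<\alpha$, and $\pi_j$ holds at least one good besides $h$, then $i$ strongly envies $j$, because deleting any single good other than $h$ leaves $i$ a bundle of value at least $u_i(h)=\alpha>u_i(\pi_i)$. In every case below the envier will be a heavy-edge-deficient vertex, whose value is at most $q\beta<\alpha$ by the previous paragraph.

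Now split on the orientations of $h$ and $h'$. If both point at $u_3$, then $u_4$ lacks $h'$ while $\pi_{u_3}\supseteq\{h,h'\}$ contains the extra good $h$, so $u_4$ strongly envies $u_3$. If both point away from $u_3$, then $u_3$ holds only light edges while $\pi_{u_2}\supseteq\{h,e\}$ contains the extra good $e$, so $u_3$ strongly envies $u_2$. In the mixed configuration $h\to u_2,\ h'\to u_3$, the vertex $u_4$ lacks $h'$, and the observation applies as soon as $\pi_{u_3}$ contains a good besides $h'$, i.e.\ whenever $u_3$ keeps a light edge; the mirror configuration $h\to u_3,\ h'\to u_4$ is entirely analogous with $u_2$ as the envier.

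The one delicate point --- and the main obstacle --- is the residual sub-case where the light edges sit as adversarially as possible: in the configuration $h\to u_2,\ h'\to u_3$ this means all $q-1$ light edges parallel to $h$ go to $u_2$ and the light edge parallel to $h'$ goes to $u_4$, so that $\pi_{u_3}=\{h'\}$ is a singleton and the observation has nothing extra to delete. Here $u_3$ has value exactly $\alpha$, whereas $u_2$ holds $h$, $e$, and all $q-1$ light edges, so $u_3$ values $\pi_{u_2}$ at $\alpha+(q-1)\beta$; deleting the non-incident good $e$ still leaves value $\alpha+(q-1)\beta>\alpha$, so $u_3$ strongly envies $u_2$ (this is where $q\ge2$ and $\beta>0$ are used). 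In the mirror residual sub-case the same idea makes $u_3$ strongly envy $u_4$ after deleting the non-incident good $e'$. In every case $\pi$ exhibits a strong envy and so is not EFX, which proves the contrapositive and hence the lemma.
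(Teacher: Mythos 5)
Your proof is correct and takes essentially the same route as the paper's: a case analysis on the orientations of the two heavy edges of $H_q$, producing in each case a heavy-edge-deficient vertex worth less than $\alpha$ that strongly envies a neighbour holding a heavy edge plus at least one other good. The one substantive divergence is your explicit handling of the residual sub-cases and the accompanying claim that $\beta>0$ is needed, and this is a genuine catch rather than an over-cautious hedge. In the configuration $h\to u_2$, $h'\to u_3$ with every light edge oriented away from $u_3$, the strong envy of $u_3$ toward $u_2$ rests on $\alpha+(q-1)\beta>\alpha$, and in the mirror configuration on $\alpha+\beta>\alpha$; both fail when $\beta=0$. The paper's own proof makes the same move silently (``because $e'$ is oriented toward $u_4$, the vertex $u_3$ in fact strongly envies $u_4$'' is exactly the assertion $\alpha+\beta>\alpha$), and your counterexample is valid: taking $G=H_q$ with $\beta=0$, orienting $h\to u_2$, $h'\to u_3$, all light edges incident with $u_3$ away from $u_3$, $e\to u_2$ and $e'\to u_4$ gives an EFX orientation ($u_4$ envies $u_3$ but $\pi_{u_3}=\{h'\}$ is a singleton, so the envy is not strong), violating the lemma's conclusion. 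Since the paper's definition of a bi-valued symmetric multigraph permits $\beta=0$, the lemma as stated is false in that degenerate case and should carry the hypothesis $\beta>0$; this is harmless downstream, as the reduction in Theorem~\ref{thm:main} is free to choose $\beta>0$. Apart from making that hypothesis explicit, your argument matches the paper's and is complete.
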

\begin{proof}
	Fix any EFX orientation of $G$ and suppose for contradiction that $e, e'$ are directed toward $v_2, v_4$, respectively. Consider the two heavy edges of $H$. Suppose neither are directed toward $v_3$. Because $v_3$ is not adjacent to a vertex in $V(G) \setminus V(H)$, the maximum utility it can receive is $2\beta$ if it receives all its incident light edges. Since $\alpha > q \beta \geq 2\beta$, the vertex $v_3$ envies $v_2$. Since both $e$ and the heavy edge between $v_2$ and $v_3$ are directed toward $v_2$, the vertex $v_3$ still envies $v_2$ even if we disregard $e$, that is, $v_3$ strongly envies $v_2$, a contradiction.
	
	Suppose instead the two heavy edges are both directed in the same direction with respect to Figure~\ref{fig:subgadget}. Without loss of generality, assume they are both directed rightward. In this case, $v_2$ envies $v_3$ regardless of whether it receives the light edge between $v_2$ and $v_3$. This is because $v_2$ is incident to exactly one heavy edge (directed toward $v_3$) and two light edges, as $v_2$ is not adjacent to any vertex in $V(G) \setminus V(H)$. Since $v_2$ does not strongly envy $v_3$ and $v_3$ receives the heavy edge between $v_2$ and $v_3$, all of the light edges incident to $v_3$ are directed away from $v_3$, causing $v_3$ to envy $v_4$. However, because $e'$ is directed toward $v_4$, the vertex $v_3$ strongly envies $v_4$, a contradiction.
	
	Otherwise, both heavy edges are directed toward $v_3$, and we have a contradiction because $v_2$ strongly envies $v_3$.
\end{proof}

\begin{figure}[tb]
	\subfigure[OR gadget]{
		\centering
		\begin{tikzpicture}
			\node[blkvertex] (w) at (0, 0) {};
			\node[redvertex] (v) at (-1, 0) {};
			\node[blkvertex] (a') at (-2, 0) {};
			\node[redvertex] (u) at (1, 0) {};
			\node[blkvertex] (b') at (2, 0) {};
			\node[redvertex] (a) at (-2, 1) {};
			\node[redvertex] (b) at (2, 1) {};
			\node[redvertex] (c) at (0, -1) {};
			\node[blkvertex] (v') at (-1, -1) {};
			\node[blkvertex] (u') at (1, -1) {};
			\node[blkvertex] (c') at (0, -2) {};
			\draw[thick]	(a) to node [right] {$x$} (a') 
			(b') to node [left] {$y$} (b)
			(v) to (w)
			(w) to (u)
			(v) to (v')
			(u) to (u')
			(c) to node [right] {$x \vee y$} (c');
			\draw[dashed]	(a') to (v)
			(w) to (c)
			(u) to (b')
			(a) to[bend right=70] (c')
			(b) to[bend left=70] (c');
		\end{tikzpicture}}
	
	\subfigure[NOT gadget of multiplicity $q \geq 2$]{
		\centering
		\begin{tikzpicture}
			\node[redvertex] (a) at (0, 0) {};
			\node[blkvertex] (u) at (1, 0) {};
			\node[redvertex] (v) at (2, 0) {};
			\node[blkvertex] (w) at (3, 0) {};
			\node[redvertex] (b) at (4, 0) {};
			
			\node[blkvertex] (a') at (0, -1) {};
			\node[redvertex] (u') at (1, -1) {};
			\node[blkvertex] (v') at (2, -1) {};
			\node[redvertex] (w') at (3, -1) {};
			\node[blkvertex] (b') at (4, -1) {};
			
			\node (la) at (0, 0.25) {$v_1$};
			\node (lu) at (1, 0.25) {$v_2$};
			\node (lv) at (2, 0.25) {$v_3$};
			\node (lw) at (3, 0.25) {$v_4$};
			\node (lb) at (4, 0.25) {$v_5$};
			
			\node (la') at (0, -1.25) {$w_5$};
			\node (lu') at (1, -1.25) {$w_4$};
			\node (lv') at (2, -1.25) {$w_3$};
			\node (lw') at (3, -1.25) {$w_2$};
			\node (lb') at (4, -1.25) {$w_1$};
			
			\draw[thick]	(a) to node[left] {$x$} (a')
			(b) to node[right] {$\neg x$} (b')
			(u) to[bend left=25] (v)
			(v) to[bend left=25] (w)
			(u') to[bend left=25] (v')
			(v') to[bend left=25] (w');
			\draw[dashed]	(a) to (u)
			(w) to (b)
			(a') to (u')
			(w') to (b')
			(u) to[bend right=25] (v)
			(v) to[bend right=25]  (w)
			(u') to[bend right=25] (v')
			(v') to[bend right=25] (w');
			
		\end{tikzpicture}}
	\subfigure[Duplication gadget]{
		\centering
		\begin{tikzpicture}
			\node[redvertex] (a) at (0, 1) {};
			\node[blkvertex] (b) at (0, 0) {};
			\node[redvertex] (c) at (1, 1) {};
			\node[blkvertex] (d) at (1, 0) {};
			\draw[thick]	(a) to node[left] {$x$} (b)
			(c) to node[right] {$x$}(d);
			\draw[dashed]	(a) to (d)
			(b) to (c);
		\end{tikzpicture}}
	
	\subfigure[TRUE gadget]{
		\centering
		\begin{tikzpicture}
			\node[blkvertex] (1) at (1, 0) {};
			\node[redvertex] (2) at (0.5, 0.866) {};
			\node[blkvertex] (3) at (-0.5, 0.866) {};
			\node[redvertex] (4) at (-1, 0) {};
			\node[blkvertex] (5) at (-0.5, -0.866) {};
			\node[redvertex] (6) at (0.5, -0.866) {};
			\node[redvertex] (7) at (2, 0) {};
			\node[blkvertex] (8) at (3, 0) {};
			\node[redvertex] (9) at (4, 0) {};
			\node (l1) at (1.25, -0.25) {$v_1$};
			\node (l2) at (0.75, 1.066) {$v_2$};
			\node (l3) at (-0.75, 1.066) {$v_3$};
			\node (l4) at (-1.3, 0) {$v_4$};
			\node (l5) at (-0.75, -1.066) {$v_5$};
			\node (l6) at (0.75, -1.066) {$v_6$};
			\node (l7) at (2, -0.25) {$v_7$};
			\node (l7) at (3, -0.25) {$v_8$};
			\node (l7) at (4, -0.25) {$v_9$};
			\draw[thick] (8) to node[above] {\texttt{true}} (9)
			(2) to[bend left=25] (3)
			(3) to[bend left=25] (4)
			(5) to (6)
			(1) to (7);
			\draw[dashed] (1) to (2)
			(4) to (5)
			(6) to node[left] {$*$} (1)
			(7) to (8)
			(2) to[bend right=25] (3)
			(3) to[bend right=25] (4);
		\end{tikzpicture}}
	\caption{The gadgets used in Theorem~\ref{thm:main}. Each solid (resp.\ dashed) edge represents a heavy (resp.\ light) edge, except for the dashed edge marked with * in (d) which represent $q$ light edges.}
	\label{fig:gadgets}
\end{figure}

We are now ready to prove Theorem~\ref{thm:main}.

\restatemain*
\begin{proof}
	Before giving our reduction, we first make two assumptions about instances of \textsc{CircuitSAT}. The first assumption is that an instance contains no AND gates, because AND gates can be simulated using a combination of NOT and OR gates. The second assumption is that an instance uses the TRUE gate at least once. Both of these assumptions can be made without loss of generality.
	
	Fix any $q \geq 2$ and two edge weights such that $\alpha > q\beta$. Given an instance $C$ of \textsc{CircuitSAT}, we represent each Boolean variable $x$ with a heavy edge $e_x$ between a black and a white vertex. We use the duplication gadget in Figure~\ref{fig:gadgets}(c) to duplicate the edge $e_x$ as many times as the literals $x$ and $\overline{x}$ appear in $C$. The other gadgets shown in Figure~\ref{fig:gadgets} show how to combine Boolean expressions together according to $C$ to construct a multigraph $G$.
	
	We show the properties given in the theorem hold for $G$. First, any time we combine the gadgets in Figure~\ref{fig:gadgets}, we do not introduce edges between vertices of the same colour, so $G$ is bipartite. On the other hand, we have $\alpha > q\beta$ by construction. Moreover, by inspection of the gadgets in Figure~\ref{fig:gadgets}, each heavy component of $G$ is an NTOM. Finally, $G$ has multiplicity $q$ because the TRUE gadget is used at least once.
	
	We claim that $C$ has a satisfying assignment if and only if $G$ has an EFX orientation. The truth value assigned to a variable corresponds to the orientation of the heavy edge representing that variable. Specifically, an assignment of \texttt{true} (resp.\ \texttt{false}) corresponds to the heavy edge being directed toward its black (resp.\ white) endpoint. For convenience, we will directly refer to a heavy edge as being {\em true} or {\em false} depending on its orientation. We proceed to show that each of the gadgets correctly corresponds to the stated Boolean functions. The OR and duplication gadgets are those used in the reduction due to \citet{christodoulou2023fair}, so we only need to consider the NOT and TRUE gadgets.
	
	Consider the NOT gadget. Fix any EFX orientation. We show that the edge $\neg x$ is false if and only if the edge $x$ is true. By symmetry, we only need to show one direction. Suppose the edge $x$ is true, i.~e.\ it is directed toward $v_1$. Then, $w_5$ envies $v_1$. Since $w_5$ does not strongly envy $v_1$, the light edge between $v_1$ and $v_2$ must be directed toward $v_2$. The submultigraph induced by the vertices $v_i$ is exactly $H$ in Figure~\ref{fig:subgadget}. Since none of $v_2, v_3, v_4$ are adjacent to a vertex in $V(G) \setminus V(H)$, Lemma~\ref{lemma:subgadget} implies the light edge between $v_4$ and $v_5$ is oriented toward $v_5$. If the edge $\neg x$ is oriented toward $v_5$, then $w_1$ would strongly envy $v_5$. Hence, the edge $\neg x$ is oriented toward $w_1$, i.~e.\ it is false.
	
	Moreover, we must show that regardless of the orientation of $x$, the NOT gadget has an EFX orientation. By symmetry, assume $x$ is directed toward $v_1$ without loss of generality. It is straightforward to check that the orientation in which every edge is oriented in a clockwise manner, except for the light edge between $v_2, v_3$ and the light edge between $w_2, w_3$ which are oriented counterclockwise, is EFX.
	
	Now consider the TRUE gadget. Fix any EFX orientation. We show that the edge between $v_8$ and $v_9$ is directed toward $v_9$, i.~e.\ it is true. Suppose instead that it is directed toward $v_8$. Because $v_9$ does not strongly envy $v_8$, the light edge between $v_7$ and $v_8$ is directed toward $v_7$. Because $v_1$ does not strongly envy $v_7$, the heavy edge between $v_7$ and $v_1$ is directed toward $v_1$. Since $v_7$ does not strongly envy $v_1$, all $q+1$ light edges incident to $v_1$ are directed away from $v_1$. The submultigraph induced by $v_1, v_2, \dots, v_5$ is $H$ and none of $v_2, v_3, v_4$ are adjacent to any vertex in $V(G) \setminus V(H)$. Hence, Lemma~\ref{lemma:subgadget} implies the light edge between $v_4$ and $v_5$ is directed toward $v_5$. Since the light edges between $v_1$ and $v_6$ are directed toward $v_6$, regardless how the heavy edge between $v_5$ and $v_6$ is directed, one of $v_5, v_6$ strongly envies the other, a contradiction. Thus, the edge between $v_8$ and $v_9$ is directed toward $v_9$.
	
	It remains to exhibit an EFX orientation of the TRUE gadget. Orient all of the edges on the path between $v_1$ and $v_9$ rightward in the direction of $v_9$.  The remaining edges all belong to a cycle of length 6. Orient all of them counterclockwise along the cycle, except for the light edge between $v_2$ and $v_3$, which is oriented in a clockwise manner. It is straightforward to verify that this is an EFX orientation.
\end{proof}

There are two notable differences between the multigraphs that our reduction and the reduction due to \citet{christodoulou2023fair} produce. First, while their reduction can result in a non-bipartite multigraph $G$ as a consequence of their NOT and TRUE gadgets, we circumvent this by designing new gadgets that ensure $G$ is bipartite. This is important because while \citet{zeng2024structure} showed bipartite {\em graphs} to have EFX orientations, our reduction implies finding EFX orientations for bipartite {\em multigraphs} is NP-hard. Second, by replacing the TRUE gadget due to \citet{christodoulou2023fair} with a novel TRUE gadget, we ensure that each vertex of $G$ is incident to at most only two light edges, rather than the three in their reduction. Doing so tightens the gap between $\alpha$ and $\beta$ required to guarantee NP-hardness from $\alpha >3\beta$ to $\alpha >2\beta$.

\section{EFX Orientations}\label{section:EFX-orientations}

In this section, we prove Theorem~\ref{thm:non-tree-thm}, which together with Observation~\ref{obs:one-tree-example} imply that an NTOM is a problematic structure preventing an EFX orientation from existing --- as long as a symmetric bi-valued multigraph does not contain an NTOM, then it has an EFX orientation. If it does contain an NTOM, then it is possible that no EFX orientations exist.

We assume multigraphs to be connected. Otherwise, we can apply Theorem~\ref{thm:non-tree-thm} to each component.

\subsection{Technical Lemmas}

Our approach to proving Theorem~\ref{thm:non-tree-thm} consists of three steps. First, we find a (possibly partial) EF orientation of a small subset of the edges of $G$ that satisfies certain properties (Lemmas~\ref{lemma:non-odd-multitree} and \ref{lemma:not-multitree}). Second, we apply a technical lemma (Lemma~\ref{lemma:extension-lemma}) to finish orienting almost all of the remaining edges while maintaining envy-freeness, but leaving a matching of light edges unoriented. Finally, we finish orienting the matching of light edges (Lemma~\ref{lemma:light_matching}).

As $G$ does not have a heavy component whose heavy edges induce an NTOM, each heavy component $K$ is one of 3 types:
\begin{enumerate}
	\item $K$ is non-trivial and the heavy edges of $K$ induce a non-odd multitree;
	\item $K$ is non-trivial and the heavy edges of $K$ do not induce a multitree;
	\item $K$ is trivial.
\end{enumerate}

First, we handle the simple case in which every heavy component of $G$ is of type 3 (i.~e.\ trivial). Observe that any vertex with a self-loop must be unenvied in any EFX orientation. So, increasing the weight of a self-loop in an EFX orientation preserves EFX. These observations together allow us to assume without loss of generality that all self-loops of $G$ have weight $\beta$. Thus, all edges (including self-loops) of $G$ have weight $\beta \geq 0$.

In the case that $\beta = 0$, any orientation is EFX. Suppose $\beta > 0$. Observe that the EFX condition is scale-invariant in the sense that scaling any agent's utility function by any positive multiplicative factor does not affect whether an allocation is EFX. Thus, we may assume $\beta = 1$ without loss of generality. So, for each vertex $i$ and each edge $e$, we have $u_i(e) \in \{0,1\}$. It is known that for such instances (called {\em binary instances}), a maximum Nash welfare (MNW) allocation is simultaneously EFX \citep{ABFHV21} and PO \citep{caragiannis2019unreasonable}, and can be found in polynomial time \citep{darmann2015maximizing,barman2018greedy}. PO allocations of a multigraph whose edges have positive weight are orientations, so this can be stated as:

\begin{theorem}\citep{darmann2015maximizing,barman2018greedy}\label{thm:only-trivial}
	Let $G$ be a bi-valued symmetric multigraph. If every heavy component of $G$ is trivial, then $G$ has an EFX orientation that can be found in polynomial time.
\end{theorem}

The remainder of this section handles the case in which $G$ contains a heavy component of type 1 or 2. The following two lemmas show the first step in orienting such components. A {\em (partial) orientation} of a heavy component $K$ of a multigraph $G$ is a (partial) orientation of $G$ that only orients edges of $K$.

\begin{lemma}\label{lemma:non-odd-multitree}
	If $K$ is a type 1 heavy component, then there exists a (possibly partial) EF orientation $\pi^K$ of $K$ and two vertices $v \neq w$ in $K$ such that
	\begin{enumerate}
		\item for each vertex $i \in K$, we have $u_i(\pi^K_i) \geq \alpha$;
		\item all edges between $v$ and $w$ are oriented except possibly one light edge;
		\item $\pi^K$ is PEF between $v$ and $w$; and
		\item at most one edge is oriented between each pair of vertices, except for the pair $v, w$.
	\end{enumerate}
	Moreover, $\pi^K$ can be found in polynomial time.
\end{lemma}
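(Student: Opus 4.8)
The plan is to exploit the single structural feature that distinguishes a non-odd multitree from an odd one: since not every tree-edge has odd multiplicity, there is at least one pair of adjacent vertices, which I will take to be $v$ and $w$, joined by an even number $m \geq 2$ of parallel heavy edges. This even bundle is exactly what lets me sidestep the parity obstruction described below, so I would fix such a pair $v,w$ at the outset and designate them as the special pair in the statement. Since $K$ is non-trivial, the underlying tree has at least one edge, so such a choice is available.

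The core difficulty is property (1): I want every vertex of $K$ to receive value at least $\alpha$, which I will guarantee by handing each vertex at least one heavy edge. If I only had the underlying tree $T$ (one edge per adjacent pair), this would be impossible, since $T$ has $n$ vertices but only $n-1$ edges, so orienting $T$ can produce total in-degree at most $n-1 < n$, forcing some vertex to receive nothing. The even bundle $vw$ resolves this, because a set of $\geq 2$ parallel heavy edges can feed a heavy edge to \emph{both} of its endpoints simultaneously. Concretely, I would delete the edge $vw$ from $T$, splitting it into the subtree $T_v$ containing $v$ and the subtree $T_w$ containing $w$, root $T_v$ at $v$ and $T_w$ at $w$, and then orient, for each remaining tree-edge, a single heavy copy toward the child (away from the root). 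Every non-root vertex then receives exactly one heavy edge from its parent edge, while the two roots $v$ and $w$ are each fed a heavy edge from the $vw$ bundle; all other parallel heavy copies and all light edges on non-$vw$ pairs I leave unoriented. This immediately yields properties (1) and (4): at most one (heavy) edge is oriented per non-$vw$ pair, and every vertex holds a heavy edge of value $\alpha$.

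For the pair $v,w$ I would orient the heavy bundle symmetrically, $m/2$ copies toward each of $v$ and $w$, and split any $\ell$ light edges between them as evenly as possible, leaving exactly one light edge unoriented when $\ell$ is odd. Because $m$ is even the heavy contributions to $v$ and $w$ are equal, and the light contributions are equal by construction, so $\pi^K$ is PEF between $v$ and $w$ (property (3)), and all edges between them are oriented except possibly one light edge (property (2)).

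It remains to verify envy-freeness, which I expect to be the step requiring the most care. Since each $u_i$ is additive and supported on edges incident to $i$, the only way $i$ can envy $j$ is through edges between $i$ and $j$ lying in $j$'s bundle. For a non-$vw$ pair at most one edge is oriented and it is heavy, so $u_i(\pi^K_j) \leq \alpha \leq u_i(\pi^K_i)$, ruling out even weak envy; for the pair $v,w$, the PEF condition together with the fact that $w$'s bundle contains no self-loops at $v$ gives $u_v(\pi^K_v) \geq u_v(\pi^K_v \cap E_{vw}) = u_v(\pi^K_w \cap E_{vw}) = u_v(\pi^K_w)$, and symmetrically for $w$. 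Hence no vertex envies another and $\pi^K$ is EF. Finally, rooting the two subtrees and carrying out these orientations is clearly polynomial-time, which gives the last claim. The main obstacle throughout is recognizing that the even-multiplicity edge is the essential resource for defeating the $n$-versus-$(n-1)$ counting barrier, and then checking that the symmetric split at $v,w$ is precisely what reconciles property (1) with envy-freeness.
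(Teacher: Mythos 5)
Your proposal is correct and follows essentially the same construction as the paper: pick an adjacent pair $v,w$ with an even heavy bundle, split the $vw$ edges evenly to get PEF, and orient a heavy spanning tree away from $v,w$ so every other vertex receives one heavy edge. The only cosmetic difference is that you phrase the tree orientation as rooting the two components of $T - vw$ at $v$ and $w$, while the paper orients the spanning tree (minus one already-oriented $vw$ copy) away from the pair; these are the same orientation.
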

\begin{proof}
	Since $K$ is type 1, it is non-trivial and the heavy edges of $K$ induce a non-odd multitree. Let $v \neq w$ be a pair of vertices of $K$ with an even number $h$ of heavy edges in between and $\ell$ denote the number of light edges between $v, w$. We construct $\pi^K$ in two steps. First, orient $h/2$ heavy edges and $\lfloor \ell/2 \rfloor$ light edges between $v, w$ to each of $v, w$. Then, fix a spanning tree $T$ of $K$ consisting of only heavy edges. $T$ contains exactly one heavy edge $e$ between $v, w$, which is already oriented. Orient the tree edges of $T$ (except for $e$) in the direction away from $v, w$.
	
	Clearly, $\pi^K$ satisfies (1)--(4) and can be found in polynomial time. To see that $\pi^K$ is EF, observe that neither of $v, w$ envies the other by (3), and for any other pair of vertices $i \neq j$, (1) and (4) ensures neither envies the other.
\end{proof}

\begin{lemma}\label{lemma:not-multitree}
	If $K$ is a type 2 heavy component, then there exists a (possibly partial) EF orientation $\pi^K$ of $K$ such that
	\begin{enumerate}
		\item for each vertex $i \in K$, we have $u_i(\pi^K_i) \geq \alpha$; and
		\item at most one edge is oriented between each pair of vertices.
	\end{enumerate}
	Moreover, $\pi^K$ can be found in polynomial time.
\end{lemma}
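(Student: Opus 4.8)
The plan is to exploit the fact that conditions (1) and (2) together already force $\pi^K$ to be EF, so that the genuine content of the lemma is purely combinatorial: orient a subset of the heavy edges of $K$ so that every vertex receives an incoming heavy edge, while no pair of vertices has two oriented edges between them. To see that (1) and (2) imply envy-freeness, fix distinct vertices $i, j$ and observe that the only edges in $\pi^K_j$ to which $i$ assigns positive utility are edges between $i$ and $j$; by condition (2) there is at most one such edge, so $u_i(\pi^K_j) \leq \alpha \leq u_i(\pi^K_i)$ using condition (1). Hence $i$ does not envy $j$, and self-loops are incident with only one vertex so never contribute to envy. This observation also explains why the special pair $v, w$ from Lemma~\ref{lemma:non-odd-multitree} is unnecessary here: in the type 2 setting the presence of a cycle will let us give \emph{every} vertex an incoming heavy edge directly.

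It therefore suffices to produce a heavy-edge orientation with all in-degrees at least one and at most one oriented edge per pair. Let $H$ be the submultigraph induced by the heavy edges of $K$. Since $K$ is a heavy component, $H$ is connected; since $K$ is of type 2, condensing the parallel edges of $H$ does not yield a tree, so $H$ has at least as many edges as vertices (a heavy self-loop counts toward this surplus, as a self-loop is itself a cycle after condensing). Concretely, I would take a spanning tree $T$ of $H$ built from non-loop heavy edges, which exists by connectivity, and then select one further heavy edge $e^*$ lying outside $T$, which exists by the edge count. If $e^*$ is an ordinary edge, then $T + e^*$ is unicyclic: I would orient its unique cycle cyclically and orient each remaining tree edge away from the cycle, so that every vertex has in-degree exactly one. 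If instead $e^*$ is a self-loop at a vertex $v$, I would orient $e^*$ toward $v$ and orient $T$ away from $v$, so that $v$ is covered by its loop and every other vertex receives its parent edge. In either case, pulling the orientation back to $H$ by choosing a single representative heavy edge for each used condensed edge, and leaving all parallel copies and all light edges unoriented, yields the desired partial orientation $\pi^K$.

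The verification is then routine: each vertex has in-degree at least one in the chosen subgraph, giving $u_i(\pi^K_i) \geq \alpha$ and hence condition (1); at most one heavy edge is oriented between any pair by construction, giving condition (2); and envy-freeness follows from the observation in the first paragraph. Finding a spanning tree, locating the extra edge, and performing the orientation all run in polynomial time. I do not expect a serious obstacle, since the cycle guaranteed by type 2 furnishes exactly the extra in-degree that a multitree lacks; the one point requiring care is the interaction of self-loops with the condensing operation, so I would be explicit that when $K$ fails to be a multitree solely because of a heavy self-loop, that loop plays the role of $e^*$ and covers the vertex that a pure tree orientation would otherwise leave with in-degree zero.
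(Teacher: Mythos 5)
Your proposal is correct and follows essentially the same route as the paper: take a spanning tree of heavy edges, use the failure of the multitree property to find one extra heavy self-loop or one extra heavy edge between a pair of vertices not joined by a tree edge, orient so that every vertex receives exactly one heavy edge, and observe that conditions (1) and (2) already imply envy-freeness. One caution: the extra edge $e^*$ must be chosen in the condensed graph (equivalently, as a self-loop or between a pair \emph{not} joined by a tree edge), since an $e^*$ parallel to a tree edge would make your ``unique cycle'' a $2$-cycle whose cyclic orientation places two oriented edges between a single pair and violates condition (2); your ``pulling back'' step suggests you intend this, but it should be made explicit at the moment $e^*$ is selected.
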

\begin{proof}
	Since $K$ is type 2, it is non-trivial and the heavy edges of $K$ do not induce a multitree. Let $T$ be a spanning tree of $K$ that consists of only heavy edges. Since the heavy edges of $K$ do not induce a multitree, there is a heavy self-loop at some vertex $v$ or there are two vertices $v \neq w$ that are joined by a heavy edge and not joined by a tree edge of $T$. Orient all tree edges of $T$ in the direction away from $v$. Then, orient either a self-loop at $v$ toward $v$ in the first case, or exactly one heavy edge between $v, w$ toward $v$ in the second case.
	
	In either case, each vertex receives exactly one incident heavy edge, so (1) holds. The construction implies (2). Moreover, (1)--(2) imply $\pi^K$ is EF. On the other hand, it is clear that $\pi^K$ can be constructed in polynomial time.
\end{proof}

To proceed, we require the following technical result.

\begin{theorem}\citep{plaut2020almost}\label{thm:2vertices}
	For two vertices $i \neq j$ with additive utility functions, there exists an EFX allocation in which $i$ does not envy $j$. Moreover, such an allocation can be found in polynomial time.
\end{theorem}

Their idea is to compute an EFX allocation $\pi$ from the perspective of $j$, that is, an allocation of the items into two bundles such that $j$ would not envy $i$ regardless of which bundle $j$ receives. Letting $i$ pick a bundle of $\pi$ before $j$ yields an EFX allocation in which $i$ does not envy $j$.

We can now show Lemma~\ref{lemma:extension-lemma}, which allows us to extend the partial orientations resulting from Lemmas~\ref{lemma:non-odd-multitree} and \ref{lemma:not-multitree} while preserving envy-freeness. 

\begin{lemma}\label{lemma:extension-lemma}
	Let $G$ be a bi-valued symmetric multigraph of multiplicity $q \geq 1$ and $\pi' = (\pi'_1, \pi'_2, \dots, \pi'_n)$ be a (possibly partial) EF orientation of $G$. If $\pi'$ orients at most one edge between a pair of vertices $i \neq j$ and
	\begin{enumerate}
		\item $u_i(\pi'_i) \geq \alpha$ and $u_j(\pi'_j) \geq \alpha$; or
		\item $u_i(\pi'_i) \geq \beta$ and $u_j(\pi'_j) \geq \beta$ and there are only light edges between $i, j$.
	\end{enumerate}
	or if $\pi'$ orients no edges between $i \neq j$ and
	\begin{enumerate}
		\item[3.] $u_i(\pi'_i) \geq \alpha$; or
		\item[4.] $u_i(\pi'_i) \geq \beta$ and there are only light edges between $i, j$.
	\end{enumerate}
	then there exists an EF extension $\pi$ of $\pi'$ such that all edges between $i, j$ and self-loops at $i, j$ are oriented. Moreover, $\pi$ can be found in polynomial time.
\end{lemma}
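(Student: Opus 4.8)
The plan is to collapse the lemma to a two-agent problem and then settle that problem by a balanced-partition (greedy) argument, with the weight hypotheses providing exactly the slack needed.

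First I would observe that the only edges $\pi$ adds on top of $\pi'$ are the self-loops at $i,j$ (which are forced toward their owners) and the not-yet-oriented edges of $E_{ij}$; these edges all have zero utility for every agent $k\notin\{i,j\}$. Hence only the bundles $\pi_i,\pi_j$ change, and of all agents only $i$ and $j$ assign nonzero value to the newly added edges, so $u_k(\pi_i)$ and $u_k(\pi_j)$ are unchanged for every $k\notin\{i,j\}$. Since $\pi'$ is EF, it follows that the extension $\pi$ is EF if and only if $i$ and $j$ do not envy each other in $\pi$. The whole task is therefore to decide, for each movable edge of $E_{ij}$, whether to send it to $i$ or to $j$ so that this single mutual no-envy condition holds.

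Next I would set up the two-agent bookkeeping. In cases (1),(2) the statement is symmetric in $i,j$, so I may assume the at-most-one already-oriented edge $e_0\in E_{ij}$ points toward $i$; write $w_0\in\{\alpha,\beta\}$ for its weight, and set $w_0=0$ (no $e_0$) in cases (3),(4). Let $s_i,s_j\ge 0$ be the total weight of the self-loops newly assigned to $i,j$, and suppose the movable edges send total weight $U'$ to $i$ and $W'$ to $j$. Then $u_i(\pi_i)=u_i(\pi'_i)+s_i+U'$ and $u_i(\pi_j)=W'$ (the edge $e_0$, if present, already lies in $\pi_i$), while $u_j(\pi_j)=u_j(\pi'_j)+s_j+W'$ and $u_j(\pi_i)=w_0+U'$. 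Writing $a=u_i(\pi'_i)$ and $b=u_j(\pi'_j)$, the two no-envy conditions read $a+s_i+(U'-W')\ge 0$ and $b+s_j-w_0-(U'-W')\ge 0$. So it suffices to split the movable edges so that the signed imbalance $D:=U'-W'$ lands in the window $[-(a+s_i),\, b+s_j-w_0]$.

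The heart is a balanced-partition claim: the movable edges, whose weights lie in $\{\alpha,\beta\}$, can be split into two parts whose totals differ by at most $w_{\max}$, where $w_{\max}=\alpha$ if some movable edge is heavy and $w_{\max}=\beta$ otherwise. I would prove this by the standard greedy rule ``assign the next edge to the currently lighter side,'' maintaining the invariant that the two running totals differ by at most the largest edge weight; this runs in polynomial time and realizes an orientation of each movable edge. Since the two parts are interchangeable I may declare the heavier one to be $j$'s, giving $D\in[-w_{\max},0]$. It remains to check $[-w_{\max},0]\subseteq[-(a+s_i),\,b+s_j-w_0]$: the left inclusion needs $a+s_i\ge w_{\max}$, which is exactly $a\ge\alpha$ in cases (1),(3) (heavy movable edge present) and $a\ge\beta$ in cases (2),(4) (all movable edges light); the right inclusion needs $b+s_j\ge w_0$, which is vacuous in cases (3),(4) where $w_0=0$, and is supplied by $b\ge\alpha\ge w_0$ in case (1) and $b\ge\beta=w_0$ in case (2).

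The main obstacle is precisely the discreteness combined with the asymmetry of the admissible window: the reachable values of $D$ form a finite set, and in cases (3),(4) the right endpoint of the window may be as small as $0$, since nothing is assumed about $u_j(\pi'_j)$. What makes the argument go through is that the greedy imbalance bound $w_{\max}$ matches exactly the slack $a+s_i\ge w_{\max}$ on $i$'s side, while the freedom to make $j$'s part the heavier one keeps $D\le 0$ and so absorbs a possibly tiny right endpoint. This also explains the stronger hypotheses in cases (1),(2): a pre-oriented edge $e_0$ pointing at $i$ contributes $w_0$ to $u_j(\pi_i)$, and $j$'s background $b$ must dominate it, which is why $b\ge\alpha$ (resp. $b\ge\beta$) is required there. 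Finally, every step—the greedy split and the forced self-loop assignment—is polynomial, giving the claimed running time.
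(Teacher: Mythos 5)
Your proof is correct, but it takes a genuinely different route from the paper. The paper reduces the step of orienting the unoriented edges of $E_{ij}$ to a black-box invocation of the Plaut--Roughgarden two-agent result (Theorem~\ref{thm:2agents} in the paper): it takes an EFX allocation $X=(X_i,X_j)$ of those edges in which $j$ does not envy $i$, and then uses the EFX property of $X$ for agent $i$ together with the hypothesis $u_i(\pi'_i)\geq\alpha$ (resp.\ $\geq\beta$) to absorb the one-edge deficit $u_i(e)$ for $e\in X_j$; the case analysis over conditions (1)--(4) is essentially the same bookkeeping as yours. You instead prove the needed two-agent fact from scratch via the greedy balanced-partition argument, obtaining the explicit imbalance window $D\in[-w_{\max},0]$ and checking it sits inside $[-(a+s_i),\,b+s_j-w_0]$. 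Your version is self-contained and makes the quantitative slack transparent (in particular it isolates exactly why $j$'s background utility must dominate $w_0$ in cases (1),(2) but not (3),(4), a point the paper's proof handles more implicitly); the paper's version is shorter given the cited theorem and does not depend on the bi-valued structure of the edge weights. Your preliminary reduction to the single condition ``$i$ and $j$ do not envy each other'' is also argued more explicitly than in the paper, and correctly so. Both arguments yield the claimed polynomial running time.
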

\begin{proof}
	If $\pi'$ orients an edge between $i$ and $j$, then we assume without loss of generality that it orients it toward $i$. We construct the extension $\pi$ of $\pi'$ in two steps. First, we finish orienting all the edges between $i, j$ in a way that ensures neither $i$ nor $j$ envies the other. Second, we orient all self-loops at $i, j$ toward $i, j$, respectively. Note that this construction does not depend on whether $\pi'$ orients an edge between $i$ and $j$.
	
	Set $\pi = \pi'$ and let $U$ denote the set of edges between $i, j$ that are not oriented by $\pi'$. Theorem~\ref{thm:2vertices} implies the existence of an EFX allocation $X = (X_i, X_j)$ of $U$ to vertices $i, j$ in which $j$ does not envy $i$. Note that $X$ is an orientation because it allocates edges between $i, j$ to $i, j$. Let $\pi$ orient the edges in $U$ according to $X$, and $\pi_i, \pi_j$ denote the resulting bundles of vertices $i, j$, respectively.
	
	We claim that $i$ and $j$ do not envy each other. By the construction of $X$, we have $u_j(X_j) \geq u_j(X_i)$. If $\pi'$ does not orient an edge between $i$ and $j$, then $u_j(\pi_j) \geq u_j(X_j) \geq u_j(X_i) = u_j(\pi_i)$, so $j$ does not envy $i$. Otherwise, $\pi$' orients some edge $e$ between $i, j$ toward $i$ and one of (1)--(2) holds. If (1) holds, then $u_j(\pi'_j) \geq \alpha \geq u_j(e)$. If (2) holds, then $u_j(\pi'_j) \geq \beta \geq u_j(e)$. In either case, $u_j(\pi_j') \geq u_j(e)$, so
	\begin{align*}
		u_j(\pi_j) &= u_j(\pi'_j) + u_j(X_j) \\
		&\geq u_j(\pi'_j) + u_j(X_i) \\
		&\geq u_j(e) + u_j(X_i) = u_j(\pi_i)
	\end{align*}
	On the other hand, if $X_j = \emptyset$, then $\pi$ orients no edges between $i, j$ toward $j$, so $i$ does not envy $j$. Assume $X_j$ is nonempty and fix any edge $e \in X_j$. If (1) or (3) holds, then $u_i(\pi'_i) \geq \alpha \geq u_i(e)$. If (2) or (4) hold, then there are only light edges between $i, j$, so $u_i(\pi'_i) \geq \beta \geq u_i(e)$. In either case, we have $u_i(\pi'_i) \geq u_i(e)$, so
	\begin{align*}
		u_i(\pi_i) &= u_i(\pi'_i) + u_i(X_i) \\
		&\geq u_i(\pi'_i) + u_i(X_j \setminus \{e\}) \\
		&= u_i(\pi'_i) + u_i(X_j) - u_i(e) \\
		&\geq u_i(X_j) = u_i(\pi_j)
	\end{align*}
	Thus, $i$ and $j$ do not envy each other. It follows that $\pi$ can additionally orient any self-loops at $i, j$ toward $i, j$, respectively, without causing envy.
	
	Our construction relies on Theorem~\ref{thm:2vertices} to find an EFX allocation between two vertices in polynomial time, so $\pi$ can be found in polynomial time.
\end{proof}

Using the above, we can construct an EF orientation $\pi^G$ of $G$ in which all heavy edges are oriented and the set of unoriented edges is a matching in $G$. Recall that a {\em matching} in $G$ is a set of edges $M$ such that each vertex of $G$ appears as an endpoint of an edge of $M$ at most once.

\begin{lemma}\label{lemma:all_but_a_matching}
	Let $G$ be a bi-valued symmetric multigraph containing a non-trivial heavy component. If $G$ does not contain a heavy component whose heavy edges induce an NTOM, then there is a (possibly partial) EF orientation $\pi^G$ of $G$ such that
	\begin{enumerate}
		\item all heavy edges are oriented;
		\item the set of unoriented edges is a matching $M$ in $G$; and
		\item $\pi^G$ is PEF between all pairs of vertices with an unoriented edge in between.
	\end{enumerate}
	Moreover, $\pi^G$ can be found in polynomial time.
\end{lemma}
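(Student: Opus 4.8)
The plan is to build $\pi^G$ in two stages: first orient every heavy edge (together with most light edges) inside each non-trivial heavy component, and then sweep outward across the light edges joining the components, using the extension lemma (Lemma~\ref{lemma:extension-lemma}) as the workhorse at every step.

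First I would orient every self-loop toward its own vertex; this never creates envy and only raises utilities. Next, I would process each non-trivial heavy component $K$: if $K$ is of type~1 I apply Lemma~\ref{lemma:non-odd-multitree}, and if it is of type~2 I apply Lemma~\ref{lemma:not-multitree} (these are the only possibilities, since $G$ has no heavy component inducing a non-trivial odd multitree). After this, every vertex lying in a non-trivial heavy component has utility at least $\alpha$, at most one edge is oriented between each pair of vertices except the single designated pair $(v,w)$ coming from each type~1 component, and that pair carries exactly one unoriented light edge with the partial orientation already PEF between $v$ and $w$. I would then finish each component internally: for every pair $i\neq j$ in $K$ other than a type~1 pair $(v,w)$, both endpoints already have utility at least $\alpha$ and at most one oriented edge, so Lemma~\ref{lemma:extension-lemma} (via condition~(1)) orients all remaining edges between them while preserving envy-freeness. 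The designated pairs $(v,w)$ are deliberately left alone, because more than one edge is already oriented between them and the extension lemma does not apply.

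It remains to orient the light edges between distinct heavy components and those incident to vertices of trivial components. Here I would grow a set $S$ of ``satisfied'' vertices, initialized to all vertices of utility at least $\beta$ (which is nonempty, since $G$ contains a non-trivial heavy component). As long as $S \neq V$, connectivity of $G$ guarantees a light edge from $S$ to $V\setminus V$; in fact no heavy edge can cross this boundary, since the two endpoints of a heavy edge lie in a common heavy component and hence are both in $S$ once $S$ contains every non-trivial-component vertex. Picking such an edge between $s\in S$ and $t\notin S$, no edge is yet oriented between $s$ and $t$, so Lemma~\ref{lemma:extension-lemma} applies with $i=s$ (condition~(3) if $u_s\geq\alpha$, otherwise condition~(4), as all crossing edges are light), orienting every $s$-$t$ edge and the self-loops at $s,t$ while preserving envy-freeness. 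Since a light edge lies between $s$ and $t$, envy-freeness forces $u_t(\pi_t)\geq\beta$, so $t$ enters $S$; iterating drives $S$ to all of $V$. Finally, any light edge internal to $S$ that is still unoriented (and is not a designated type~1 pair) is oriented by one more application of Lemma~\ref{lemma:extension-lemma} with a satisfied endpoint playing the role of $i$.

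At termination the only unoriented edges are the designated light edges, one per type~1 component, lying between its pair $(v,w)$. Distinct components are vertex-disjoint and within a component only $v$ and $w$ carry an unoriented edge, so these edges form a matching $M$; and each such pair is PEF, a property untouched by any later step since no further $v$-$w$ edge is ever oriented. Envy-freeness is maintained throughout because each of the finitely many applications of Lemma~\ref{lemma:extension-lemma} preserves it globally, and all of the lemmas invoked run in polynomial time. The main obstacle is precisely the bookkeeping that confines the leftovers to a matching: I must verify that the extension lemma's hypotheses---at most one previously oriented edge between the pair, together with the appropriate utility threshold---hold at every application, and that the only pairs to which it fails to apply are the single designated pairs of the type~1 components, so that no vertex is ever left incident to two unoriented edges.
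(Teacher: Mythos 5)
Your proposal is correct and follows essentially the same route as the paper: handle type-1 and type-2 heavy components via Lemmas~\ref{lemma:non-odd-multitree} and~\ref{lemma:not-multitree}, then finish with repeated applications of Lemma~\ref{lemma:extension-lemma}, leaving exactly one PEF light edge per type-1 special pair, which forms the matching. The only immaterial differences are that you orient self-loops up front and raise trivial-component vertices to utility $\beta$ by applying the extension lemma directly across the boundary of your set $S$, whereas the paper first hands each such vertex a single light edge in a BFS pass and only afterwards invokes the extension lemma on all non-special pairs.
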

\begin{proof}
	At least one heavy component of $G$ is of type 1 or 2. For each type 1 heavy component $K$, Lemma~\ref{lemma:non-odd-multitree} implies there exists a (possibly partial) EF orientation $\pi^K$ of $K$ and two vertices $v \neq w$ in $K$ such that (1) for each vertex $i \in K$, we have $u_i(\pi^K_i) \geq \alpha$; (2) all edges between $v$ and $w$ are oriented except possibly one light edge; (3) $\pi^K$ is PEF between $v$ and $w$; and (4) at most one edge is oriented between each pair of vertices, other than the pair $v, w$.
	
	For each type 2 heavy component $K$, Lemma~\ref{lemma:not-multitree} implies there exists a (possibly partial) EF orientation $\pi^K$ of $K$ such that (1) for each vertex $i \in K$, we have $u_i(\pi^K_i) \geq \alpha$; and (2) at most one edge is oriented between each pair of vertices. 
	
	Let $\pi$ be the (possibly partial) orientation of $G$ that orients an edge $e$ if and only if some $\pi^K$ orients $e$, in which case $\pi$ orients $e$ in the same direction as $\pi^K$.
	
	
	We mark every vertex in a type 1 or type 2 heavy component as "processed" and every vertex in a trivial heavy component as "unprocessed". Observe that $u_i(\pi_i) \geq \alpha \geq \beta$ for each processed vertex $i$ and $u_j(\pi_j) = 0$ for each unprocessed vertex $j$. In the following, we "process" the unprocessed vertices while maintaining two invariant properties: (P1) $\pi$ is EF and (P2) every processed vertex receives at least $\beta$ utility.
	
	For an unprocessed vertex $j$ that has a processed neighbour $i$, let $\pi$ orient exactly one light edge between $i, j$ toward $j$, and mark $j$ as "processed". Doing so does not cause $i$ or $j$ to envy the other because (P2) holds for $i$, and maintains (P1)--(P2) because (P2) now holds for $j$. Since $G$ is connected, we can repeat this procedure until all vertices are "processed".
	
	We claim that for all pairs of vertices $i \neq j$ in $G$ except for pairs of special vertices $v \neq w$ in type 1 heavy components, one of the conditions in Lemma~\ref{lemma:extension-lemma} holds. If $i$ and $j$ belong to different heavy components, then $\pi$ orients at most one edge between them, and (P1)--(P2) ensure $u_i(\pi_i) \geq \beta$ and $u_j(\pi_j) \geq \beta$ and there are only light edges between $i, j$, which is condition (2) of Lemma~\ref{lemma:extension-lemma}. Otherwise, $i \neq j$ belong to the same heavy component (except for the special pairs $v \neq w$) so condition (1) of Lemma~\ref{lemma:extension-lemma} holds.
	
	We now apply Lemma~\ref{lemma:extension-lemma} to all pairs of vertices $i \neq j$ except the special pairs $v \neq w$ belonging to a type 1 heavy component. This yields an EF extension of $\pi^G$ such that all edges between $i, j$ and self-loops at $i, j$ are oriented. Since $\pi^G$ is EF, we can orient any remaining unoriented self-loops toward their respective endpoints without causing envy.
	
	Clearly, $\pi^G$ orients all heavy edges, and all light edges except for exactly one light edge between a special pair of vertices $v \neq w$ in each type 1 heavy component. Moreover, Lemma~\ref{lemma:non-odd-multitree}(3) guarantees $\pi^G$ to be PEF between every such pair $v \neq w$. Thus, (1)--(3) hold for $\pi^G$.
	
	We show $\pi^G$ can be found in polynomial time. Finding the heavy components of $\pi^G$ takes polynomial time using breadth-first search (BFS) \citep{moore1959shortest}. Orienting each heavy component of type 1 and 2 using Lemmas~\ref{lemma:non-odd-multitree} and \ref{lemma:not-multitree} takes polynomial time. Processing the vertices takes polynomial time using BFS. Finally, applying Lemma~\ref{lemma:extension-lemma} at most once to each pair of vertices takes polynomial time.
\end{proof}

We now orient the remaining light edges using Lemma~\ref{lemma:all_but_a_matching}.

\begin{lemma}\label{lemma:light_matching}
	Let $G$ be a bi-valued symmetric multigraph $G$ containing a non-trivial heavy component. If $\pi^G$ is a partial EF orientation satisfying (1)--(3) of Lemma \ref{lemma:all_but_a_matching}, then there exists an extension $\pi$ of $\pi^G$ that is an EFX orientation of $G$ that can be found in polynomial time.
\end{lemma}
\begin{proof}
	Let $M$ be the set of unoriented light edges of $\pi^G$ and $v_i, w_i$ denote the endpoints of each $e_i \in M$. By possibly renaming $v_i$ and $w_i$, we can assume for each $i$, if $\pi^G$ orients an edge not between $v_i, w_i$ toward one of $v_i, w_i$, it orients one such edge toward $v_i$. If $\pi^G$ orients such edges to both $v_i$ and $w_i$, then our assumption holds without renaming $v_i$ and $w_i$.
	
	We claim that the extension $\pi$ of $\pi^G$ that orients each edge $e_i$ toward $w_i$ is an EFX orientation of $G$. Since $\pi^G$ is EF, any envy experienced by a vertex in $\pi$ is caused by an edge that $\pi$ orients in addition to $\pi^G$. Hence, any envy in $\pi$ is between $v_i$ and $w_i$ for some $i$. Fix such a pair $v_i, w_i$. The orientation $\pi$ orients $e_i$ toward $w_i$, so $w_i$ does not envy $v_i$. It remains to show that $v_i$ does not strongly envy $w_i$.
	
	Suppose first $\pi^G$ does not orient an edge not between $v_i, w_i$ toward one of $v_i, w_i$. Since $v_i$ does not envy $w_i$ in $\pi^G$ and $\pi$ only orients a single additional light edge between $v_i$ and $w_i$ to $w_i$, the envy that $v_i$ experiences toward $w_i$ can be alleviated by ignoring any of the light edges between $v_i$ and $w_i$. Because all edges oriented to $w_i$ by $\pi$ are between $v_i, w_i$, the vertex $v_i$ does not strongly envy $w_i$.
	
	Otherwise, $\pi^G$ orients an edge not between $v_i, w_i$ toward one of $v_i, w_i$. By our assumption, $\pi^G$ orients one such edge $e$ toward $v_i$. Since $\pi^G$ is PEF between $v_i$ and $w_i$, the vertices $v_i$ and $w_i$ receive an equal number of heavy (resp.\ light) edges that are between $v_i$ and $w_i$ in $\pi^G$. On the other hand, because $\pi^G$ orients $e$ toward $v_i$, we have $u_{v_i}(\pi^G_{v_i}) \geq u_{v_i}(\pi^G_{w_i}) + u_{v_i}(e)$. Thus, after $\pi$ orients the additional light edge $e_i$ between $v_i$ and $w_i$ toward $w_i$, we have
	\begin{align*}
		u_{v_i}(\pi_{v_i}) &= u_{v_i}(\pi^G_{v_i}) \\
		&\geq u_{v_i}(\pi^G_{w_i}) + u_{v_i}(e) \\
		&= u_{v_i}(\pi_{w_i}) - u_{v_i}(e_i) + u_{v_i}(e) \\
		&\geq u_{v_i}(\pi_{w_i})
	\end{align*}
	where the last inequality follows is because $u_{v_i}(e) \geq u_{v_i}(e_i)$ as $e_i$ is a light edge. So, $v_i$ does not envy $w_i$.
	
	Clearly, $\pi$ can be found in polynomial time given $\pi^G$.
\end{proof}

\subsection{Proof of Theorem~\ref{thm:non-tree-thm}}

\restateNoTrees*
\begin{proof}
	If every heavy component of $G$ is trivial, then $G$ has an EFX orientation that can be found in polynomial time by Theorem~\ref{thm:only-trivial}. Otherwise, $G$ contains a non-trivial heavy component. By Lemma~\ref{lemma:all_but_a_matching}, there exists a (possibly partial) EF orientation $\pi^G$ of $G$ such that all heavy edges are oriented, the set of unoriented edges is a matching $M$ in $G$, and $\pi^G$ is PEF between all pairs of vertices with an unoriented edge in between. By Lemma~\ref{lemma:light_matching}, there exists an extension $\pi$ of $\pi^G$ that is an EFX orientation of $G$ that can be found in polynomial time.
\end{proof}

\section{Discussion}\label{section:discuss}

In this paper, we showed that deciding whether a bi-valued symmetric multigraph $G$ of multiplicity $q$ has an EFX orientation is NP-complete, even in a highly restrictive setting. We also showed that as long as $G$ does not contain a heavy component whose heavy edges induce an NTOM, then $G$ has an EFX orientation that can be found in polynomial time. It is important to note that our results do not preclude such multigraphs from having EFX allocations. Indeed, \citet{kaviani2024almost} showed that all symmetric multigraphs have EFX allocations.

On the other hand, one might ask whether there are additional settings in which deciding whether EFX orientations exist is possible in polynomial time. Theorem~\ref{thm:main} suggests that it may be fruitful to consider settings in which $\alpha$ and $\beta$ are not so far apart.

\begin{problem}
	Is it possible to decide in polynomial time whether bi-valued symmetric multigraphs for which $\alpha \leq q\beta$ have EFX orientations?
\end{problem}




\begin{ack}
	We acknowledge the support of the Natural Sciences and Engineering Research Council of Canada (NSERC), funding reference number RGPIN-2022-04518. We also thank the anonymous reviewers for their constructive comments. 
\end{ack}



\bibliography{bibliography}

\end{document}